\newcommand {\N}	  {\mathbb{N}}
\newcommand{\comment}[1]{}
\newcommand{\ignore}[1]{}
\def\clap#1{\hbox to 0pt{\hss#1\hss}}
  \def\moverlay{\mathpalette\mov@rlay}
  \def\mov@rlay#1#2{\leavevmode\vtop{%
    \baselineskip\z@skip \lineskiplimit-\maxdimen
    \ialign{\hfil$#1##$\hfil\cr#2\crcr}}}
\newcommand{\remove}[1]{}
\begin{document}
\title{Node-balancing  by edge-increments} 

\author{Friedrich Eisenbrand\inst{1}  and 
Shay Moran\inst{2,3} and Rom Pinchasi\inst{2} and Martin Skutella\inst{3}  }

\institute{EPFL, 1015 Lausanne, Switzerland 
\and
Technion, Israel Institute of Technology, Haifa 32000, Israel. 
\and
Max Planck Institute for Informatics, Saarbr\"{u}cken, Germany.
\and
Technische Universit{\"a}t Berlin, Stra\ss{}e des 17. Juni 136, 
10623 Berlin, 
Germany
}

\maketitle

\begin{abstract}
\noindent 
Suppose you are given a graph $G=(V,E)$ with a
weight assignment $w:V\rightarrow\mathbb{Z}$ and that
your objective is to modify $w$ using legal steps
such that all vertices will have the same weight,
where in each legal step you are allowed to choose an
edge and increment the weights of its end points by $1$.

In this paper we study several variants of this problem
for graphs and hypergraphs. 
On the combinatorial side we show connections
with fundamental results from matching theory
such as Hall's Theorem and Tutte's Theorem.
On the algorithmic side 
we study the computational complexity of associated decision
problems.

Our main results are a characterization of the graphs for which any
initial assignment can be balanced by edge-increments and a strongly
polynomial-time algorithm that computes a balancing sequence of
increments if one exists. 

\end{abstract}

\begin{figure}
  \centering  

  \tikzstyle{vertex}=[circle,fill=black!25,minimum size=15pt,inner sep=0pt]
  \tikzstyle{selected vertex} = [circle,fill=red!25,minimum size=15pt,inner sep=0pt]
%  \tikzstyle{edge} = [draw,thick, bend angle  = 20]
  \tikzstyle{weight} = [font=\small]
  \tikzstyle{selected edge} = [draw,thick, ->,gray!50,decorate, decoration=snake]
  \tikzstyle{ignored edge} = [draw,line width=5pt,-,black!20]
  
  \begin{tikzpicture}

 % \foreach \pos/\name in {{(0,0)/a}, {(2,2)/b}, {(2,0)/c}, {(2,-2)/d},  {(4,3)/e},  {(4,2)/f},  {(4,0)/g}, {(4,-2)/h},  {(4,-3)/i}} 
 %     \node[vertex] (\name) at \pos {};

\def \n {6}
\def \radius {2cm}
\def \margin {8} % margin in angles, depends on the radius

\foreach \s in {1,...,\n}
{
  \node[vertex] (\s) at ({360/\n * (\s - 1)}:\radius) {$\s$};
  \draw[ >=latex] ({360/\n * (\s - 1)+\margin}:\radius) 
    arc ({360/\n * (\s - 1)+\margin}:{360/\n * (\s)-\margin}:\radius);
}

% \draw[-, bend angle = 45] (1) --   (2) --(3) -- (4) -- (5) -- (6) --(1);

% \def \n {6}
% \def \radius {2.5cm}
% \def \margin {8} % margin in angles, depends on the radius

% \foreach \s in {1,...,\n}
% {
%   \node[draw, circle] at ({360/\n * (\s - 1)}:\radius) {$\s$};
%   \draw[ >=latex] ({360/\n * (\s - 1)+\margin}:\radius) 
%     arc ({360/\n * (\s - 1)+\margin}:{360/\n * (\s)-\margin}:\radius);
% }
\end{tikzpicture}
\hspace{1cm}
 \begin{tikzpicture}

 % \foreach \pos/\name in {{(0,0)/a}, {(2,2)/b}, {(2,0)/c}, {(2,-2)/d},  {(4,3)/e},  {(4,2)/f},  {(4,0)/g}, {(4,-2)/h},  {(4,-3)/i}} 
 %     \node[vertex] (\name) at \pos {};

\def \n {6}
\def \radius {2cm}
\def \margin {8} % margin in angles, depends on the radius

\foreach \s in {3,...,\n}
{
  \node[vertex] (\s) at ({360/\n * (\s - 1)}:\radius) {$\s$};
  \draw[ >=latex] ({360/\n * (\s - 1)+\margin}:\radius) 
    arc ({360/\n * (\s - 1)+\margin}:{360/\n * (\s)-\margin}:\radius);
}

\node[selected vertex] (1) at ({360/\n * (1 - 1)}:\radius) {$2$};
  \draw[ >=latex] ({360/\n * (1 - 1)+\margin}:\radius) 
    arc ({360/\n * (1 - 1)+\margin}:{360/\n * (1)-\margin}:\radius);

\node[selected vertex] (2) at ({360/\n * (2 - 1)}:\radius) {$3$};
  \draw[ >=latex] ({360/\n * (2 - 1)+\margin}:\radius) 
  arc ({360/\n * (2 - 1)+\margin}:{360/\n * (2)-\margin}:\radius);

% \draw[-, bend angle = 45] (1) --   (2) --(3) -- (4) -- (5) -- (6) --(1);

% \def \n {6}
% \def \radius {2.5cm}
% \def \margin {8} % margin in angles, depends on the radius

% \foreach \s in {1,...,\n}
% {
%   \node[draw, circle] at ({360/\n * (\s - 1)}:\radius) {$\s$};
%   \draw[ >=latex] ({360/\n * (\s - 1)+\margin}:\radius) 
%     arc ({360/\n * (\s - 1)+\margin}:{360/\n * (\s)-\margin}:\radius);
% }
\end{tikzpicture}

  \caption{The node-weights after one incrementing step.}
  \label{fig:1}
\end{figure}
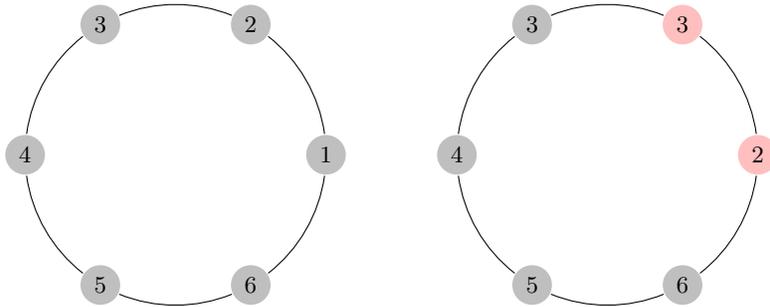

\section{Introduction}
The following puzzle is often used as an introductory puzzle for the method
of invariance and potential functions: Six boxes numbered $1$ to $6$ 
are arranged in a cycle. For every $1 \leq i \leq 6$, we start with 
$i$ oranges in box number $i$. At each step we are allowed to add one orange
to each of two adjacent boxes. Prove that we will never be able to make all
boxes contain the same number of oranges.

One of the simple solutions to this puzzle is to observe that the total number
of oranges in boxes $1,3,5$ is always smaller than the total number of oranges 
in boxes $2,4,6$ and this never changes through each step of the game.

In this paper we consider the natural generalization of the puzzle above to 
arbitrary graphs. % and to hypergraphs.
%\shay{emphasize connection to Hall's theorem?}
Let $G=(V,E)$ be a finite graph, let $w:V\rightarrow\mathbb{N}$ be a non-negative 
integer weight function on its vertices and let $e=\{u,v\}\in E$.  A
\emph{positive step} on $e$ modifies $w$ by increasing the weights of
$u,v$ by $1$ unit.  We say that $w$ is
\emph{equatable~in~$G$} % if there exists
% a sequence of steps (positive and/or negative), $S=s_1,\dots,s_m$, after which
% all vertices have the same weight. We also say that the sequence
% $S$ \emph{equates} $w$.
% Similarly, 
% we say the $w$ is \emph{positively~equatable~in~$G$}
if there exists a sequence of only positive steps, $S=s_1,\dots,s_m$,
after which all vertices have the same weight.  We also say that the
sequence $S$ \emph{positively equates} $w$.

\medskip

\noindent
Our \emph{main results} are the following. 

\begin{enumerate}[i)]
\item We characterize those graphs $G = (V,E)$ for which any initial
  assigmnet $w : V \rightarrow \N_0$ is equatable. These are the
  connected graphs with an odd number of nodes for which $G - U$ has less than
  $|U|$ isolated vertices for any $U \subset V$. Here $G-U$ is the
  subgraph of $G$ that is induced by $V \setminus U$.
  (Theorem~\ref{theorem:graph-pos}) \label{item:1}
\item We show that the following problem can be solved in strongly polynomial time. Given a graph $G = (V,E)$ and an initial assignment $w: V \rightarrow \N_0$, decide whether $w$ is equatable and compute an equating multiset of edges. (Theorem~\ref{thm:pos-equi-poly}) \label{item:2}
\item An initial assignment $w$ of the nodes of a  bipartite graph $G = (L+R,E)$  is not equatable if $w(L) \neq w(R)$, the difference $w(L)-w(R)$ is invariant under edge-increments. However, each balanced assignment with $w(L) = w(R)$ is equatable if and only if the strict Hall condition holds: For any nonempty set of vertices $X$ that is properly contained in $L$ or in $R$, one has $|X|< |N(X)|$. Here $N(X)$ denotes the neighborhood of $X$. (Theorem~\ref{theorem:bipartite-pos}) \label{item:3}
\item Finally we show that the analog of the decision problem \ref{item:2}) is $NP$-hard for hypergraphs. (Theorem~\ref{theorem:hyper-alg-pos}). 
\end{enumerate}

%\shay{
%\begin{itemize}
%\item We study the complexity of the algorithmic problem:
%Given a hypergraph $H$ and an integer weight assignment $w$,
%is $w$ equatable for $H$? is it positively equatable for $H$?
%We show that in some cases this problem is easy while in others 
%it is hard
%\item We give combinatorial characterizations...
%\end{itemize}}

%\noindent {\bf Problem A.}
%{\it 
%Given a finite graph $G$ with integer weights on its vertices, we are allowed 
%at each step to choose two neighboring vertices of $G$ and increase their
%weight by one unit, we call such a step a \emph{positive step}. 
%Under what initial conditions on the weighted graph $G$
%can we reach, by a finite sequence of positive steps, 
%a situation where all vertices of the graph have the same weight?
%}

\subsubsection*{Related work} 

The problem of equating the node-weights is closely related to \emph{perfect $b$-matchings},~\cite{Schrijver03}. Let $b \in \N_o^{|V|}$ be a vector of non-negative node-weights.  A  $b$-matching of a graph $G = (V,E)$ is a vector $x \in \N_0^{|E|}$ that satisfies 
\begin{equation}
  \label{eq:1}
  \sum_{e \in \delta(v)} x_e \leq  b_v,
\end{equation}
where  $\delta(v)$ denotes the set of edges of $G$ that are incident to $v$. 
A $b$-matching is \emph{perfect}, if the inequality in~\eqref{eq:1} can be replaced by equality. 
Thus $b$-matchings are a generalization of \emph{matchings}, where $b$ is the all ones vector. 

What is the relationship between $b$-matchings and the process of equating positive weights in graphs by edge-increments? 
Suppose that the given initial weight assignment $w \in \N_0^{|V|}$ is equatable and that the resulting equated node-weight is $\beta \in \N$.  Then, the edge-increments that lead to the balanced node-weight $\beta$ are a $b$-matching $x \in \N_0^{|E|}$ with $b_v = \beta - w_v$ for each vertex $v$. By incrementing the node-weights of each edge $e$ exactly $x_e$ times, one arrives at a balanced assignment with weight $\beta$ on all the nodes.

Maximum weight $b$-matchings can be computed in polynomial
time~\cite{Edmonds65,Edmonds65b,cunningham1978primal}.  The currently
fastest algorithms for maximum weight matching are by
Gabov~\cite{gabow1990data}, and Gabov and
Tarjan~\cite{gabow1991faster}. The fastest algorithm for weighted
$b$-matching is by Anstee~\cite{anstee1987polynomial}. Recent exciting
progress for maximum cardinality matching has been given by
Madry~\cite{madry2013navigating} improving upon the $O(m \sqrt{n})$
running time of Hopcroft and Karp~\cite{hopcroft1973n} and
\cite{karzanov1973finding} in the sparse case.

A related notion to equatable graphs is the one of a \emph{regularizable} graph. A graph is regularizable, if there exists a $k$ and a perfect $k$-matching such that each edge is chosen at least once in this matching. Thus, one obtains a $k$-regular graph by replacing each edge by as many parallel edges, as its multiplicity in the $b$-matching. Berge~\cite{berge1978regularisable} provided the following characterization of regularizable graphs. If $G$ is connected an bipartite, then $G$ is regularizable if and only if $|N(U)|>|U|$ for each non-empty stable set $U$ of $G$. This is a \emph{strict} Hall condition for stable sets.

\section{A characterization of equatable graphs}
\label{sec:char-equit-graphs}

Which are the graphs $G=(V,E)$ for which any initial assignment $w:V
\rightarrow \N_0$ is equatable? The following theorem provides the
answer to that question.

\begin{theorem}\label{theorem:graph-pos}
Let $G=(V,E)$ be a finite graph.
The following statements are equivalent:
\begin{enumerate}
\item Every integer assignment $w: V \rightarrow \N_0$ is  equatable  in $G$. \label{item:4}
\item $G$ is connected, $|V|$ is odd and for all  $ U\subseteq V$,  the graph $G-U$ has less 
than $|U|$ isolated vertices,
where $G-U$ is the vertex induced subgraph on $V-U$. \label{item:5}
\end{enumerate}
\end{theorem}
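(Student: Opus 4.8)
The engine of the proof is the reformulation already indicated in the discussion of $b$-matchings: writing $M$ for the vertex--edge incidence matrix and $\chi_v$ for the unit vector supported at $v$, a positive step on $\{u,v\}$ adds the edge vector $\chi_u+\chi_v$ (a column of $M$) to $w$. Hence $w$ is equatable if and only if $\beta\mathbf{1}-w\in\{Mx:x\in\N_0^{E}\}$ for some integer $\beta\ge\max_v w_v$, i.e. the nonnegative integer combinations of the edge vectors can reach $\beta\mathbf{1}-w$. I would prove the two implications by rather different means: elementary monotone invariants for $(1)\Rightarrow(2)$, and the existence of an integral perfect $b$-matching for $(2)\Rightarrow(1)$.

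For $(1)\Rightarrow(2)$ I would argue contrapositively, exhibiting a non-equatable $w$ whenever a clause of $(2)$ fails. Since every column of $M$ has two ones, $\sum_v(\beta\mathbf{1}-w)_v=|V|\beta-\sum_v w_v$ must be even; if $|V|$ is even, any $w$ with $\sum_v w_v$ odd is therefore not equatable. The same parity identity applied inside each connected component shows that if $G$ is disconnected one can give two components of odd order totals of opposite parity, making a common final value $\beta$ impossible (if some component has even order, give it an odd total instead). Finally, suppose some nonempty $U$ has a set $I$ of at least $|U|$ isolated vertices in $G-U$. Every edge meeting $I$ has its other endpoint in $U$, so the potential $\Phi=w(U)-w(I)$ never decreases under a positive step, whereas an equated assignment has $\Phi=\beta(|U|-|I|)\le 0$; thus $w$ equal to $1$ on $U$ and $0$ elsewhere, with $\Phi=|U|>0$, is not equatable.

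For $(2)\Rightarrow(1)$ I would first recast the isolated-vertex clause as the \emph{strict expansion} property $|N(S)|>|S|$ for every nonempty independent set $S$. These are equivalent: given such an $S$, taking $U=N(S)$ makes all of $S$ isolated in $G-U$, so the isolated-vertex condition forces $|S|<|N(S)|$; conversely the isolated vertices of $G-U$ always form an independent set with neighborhood inside $U$. A first dividend is that $G$ must be non-bipartite, since in a connected bipartite graph on an odd number of vertices the larger colour class is an independent set $S$ with $|N(S)|\le|S|$, contradicting strict expansion. So from now on $G$ is connected, non-bipartite, and strictly expanding.

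It remains to equate an arbitrary $w$. Set $b=\beta\mathbf{1}-w$ and look for an integral perfect $b$-matching. Strict expansion is equivalent, by a short Hall-type argument, to $\mathbf{1}$ lying in the interior of the cone generated by the edge vectors $\chi_u+\chi_v$ (the supporting inequalities being $b\ge0$ and $b(S)\le b(N(S))$ for independent $S$); consequently, for all large $\beta$ the point $b$ lies deep inside this cone and admits a nonnegative \emph{real} $b$-matching. Because $G$ is connected and non-bipartite, the lattice $\{Mx:x\in\mathbb{Z}^{E}\}$ equals the even-sum lattice $\{y:\sum_v y_v\text{ even}\}$, so choosing $\beta$ large with $\beta\equiv\sum_v w_v\pmod 2$ (possible since $|V|$ is odd) makes $\sum_v b_v$ even, placing $b$ in this lattice. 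The \textbf{main obstacle} is the rounding step that turns the real $b$-matching into an integral one. I expect the decisive input to be that for a connected non-bipartite graph the edge vectors form a Hilbert basis of the cone they generate, so that every even-sum lattice point of the cone is automatically a nonnegative integer combination of edge vectors; this is where the link to Tutte's theorem appears. Failing that, one can round by hand: a non-bipartite $G$ contains an odd cycle, and feeding the slack of order $\beta$ through it lets one repair half-integralities locally while keeping the solution nonnegative.
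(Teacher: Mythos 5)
Your direction $(1)\Rightarrow(2)$ is correct and essentially the paper's own argument: the parity of the total weight, the component-wise parity trick for disconnectedness, and the monotone potential $w(U)-w(I)$ for the isolated-vertex condition (the paper puts weight $1$ on a single vertex of $U$ rather than on all of $U$, but the invariant is the same). The gap is in $(2)\Rightarrow(1)$. Your ``decisive input'' --- that for every connected non-bipartite graph the edge vectors $\chi_u+\chi_v$ form a Hilbert basis of the cone they generate --- is false, and it fails even on graphs satisfying condition $(2)$. Take two triangles $abc$ and $def$ joined by the path $c\,\mbox{--}\,m\,\mbox{--}\,d$: this graph is connected, non-bipartite, has $7$ vertices, and every nonempty independent set $S$ (which meets each triangle in at most one vertex) satisfies $|S|<|N(S)|$, so condition $(2)$ holds. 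Now let $b=1$ on the six triangle vertices and $b(m)=0$. This $b$ lies in the cone (put $1/2$ on each triangle edge, $0$ on the path edges) and in the even-sum lattice ($\sum_v b_v=6$), yet it has no integral representation: both edges at $m$ are forced to $0$, and then each triangle carries odd total $b$-weight $3$, whereas any integral $x$ contributes the even amount $2\sum_{e\subseteq T}x_e$ to a triangle $T$. Consistently, this $b$ violates Tutte's condition \eqref{eq:aux} with $U=\{m\}$, where the left side is $0$ and $S(G-U)=2$; this shows that \eqref{eq:aux} is genuinely stronger than ``fractionally feasible plus correct parity,'' which is exactly the information your cone-plus-lattice argument supplies.

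Your fallback --- ``feeding the slack of order $\beta$ through an odd cycle'' --- is the actual crux, and as stated it does not go through: a basic fractional solution is half-integral with its fractional part supported on vertex-disjoint odd cycles, and merging two such half-cycles requires alternating $\pm1$ corrections along a connecting path, which subtract $1$ from edges that may carry value $0$ (in the example above every fractional solution carries $0$ on the path edges). Largeness of $b$ does not by itself place slack on the edges you need. The argument can be repaired: for $\beta$ large one can first reserve an integral flow, writing $b_\beta = Mz + b'$ with $z_e=n$ on every edge and $b'$ still in the cone since $\mathbf{1}$ is interior, and then the path corrections draw on the reserve and stay nonnegative; but this step is exactly what is missing, and it is where all the work lies. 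The paper sidesteps rounding entirely: it reduces to the unit assignments $w^{(v)}$ (equating $w^{(v)}$ lowers $v$'s weight relative to all others by one), invokes Tutte's characterization (Theorem~\ref{theorem:aux}) of \emph{integral} perfect $b$-matchings with $b(v)=2n$ and $b(u)=2n+1$ for $u\neq v$, and verifies \eqref{eq:aux} by three one-line estimates. If you prefer your single-large-$\beta$ formulation, the clean fix is to verify \eqref{eq:aux} for $b_\beta$ directly, as the paper in effect does in Lemma~\ref{lem:pos-equi-poly}, rather than to argue via fractional feasibility and the lattice.
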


Notice that condition~\ref{item:5}) implies that $G$ has at least $3$ vertices. 
We will now provide the proof of this theorem. To do so, we rely on a well known result of Tutte that characterizes the existence of a perfect $b$-matching in a graph. 

\begin{theorem}[Tutte~\cite{tutte1952factors}]\label{theorem:aux}
Let $G=(V,E)$ be a finite graph and let $b: V \rightarrow \N_0$ be
a weight function on the vertices of $G$. The following statements are equivalent. 

\begin{enumerate}[a)]
\item $G$ has a perfect $b$-matching. 
\item For every (possibly empty) subset $U$ of $V$ 
\begin{equation}\label{eq:aux}
\sum_{x \in U}b(x) \geq \sum_{x \in I(U)}b(x) + S(G-U),
\end{equation}
where $I(U)$ is the set of isolated vertices of $G-U$ and  
$S(G-U)$ is the number of connected components of $G-U$ that are not
isolated vertices whose total $b$-weight is odd.
\end{enumerate}
\end{theorem}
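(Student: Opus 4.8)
The plan is to reduce the existence of a perfect $b$-matching to the existence of an ordinary perfect matching in an auxiliary graph, and then to invoke Tutte's classical $1$-factor theorem. Throughout I may assume $b(v)\geq 1$ for every $v$: a vertex $v_0$ with $b(v_0)=0$ forces $x_e=0$ on all edges $e$ incident to it in any perfect $b$-matching, so deleting $v_0$ preserves statement a). One checks that statement b) is preserved as well, because in inequality~\eqref{eq:aux} a weight-zero vertex may always be assumed to lie in $U$: moving such a vertex into $U$ leaves $\sum_{x\in U}b(x)$ unchanged and can only increase the right-hand side, since removing a weight-zero vertex from a component of $G-U$ merely splits it into pieces whose $b$-weights still sum to the same value, producing new isolated vertices or new non-trivial components without destroying the parity witness.

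The auxiliary graph is the \emph{blow-up} $H$: replace each $v\in V$ by an independent set $V_v$ of $b(v)$ copies, and for each edge $\{u,v\}\in E$ join every copy in $V_u$ to every copy in $V_v$. A perfect matching of $H$ matches each copy of $v$ across an edge to a copy of some neighbour; setting $x_{\{u,v\}}$ equal to the number of matching edges between $V_u$ and $V_v$ then yields a perfect $b$-matching. Conversely, any perfect $b$-matching satisfies $x_e\leq\min(b(u),b(v))$ and can be realized as a perfect matching of $H$ by distributing the copies consistently. Hence a) holds if and only if $H$ has a perfect matching, which by Tutte's theorem holds if and only if $o(H-S)\leq |S|$ for every $S\subseteq V(H)$, where $o(\cdot)$ denotes the number of odd components.

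The heart of the argument is to identify inequality~\eqref{eq:aux} with Tutte's condition for $H$ on the barriers that matter. First I would evaluate both sides for a \emph{class-union} barrier $S=\bigcup_{v\in U}V_v$. Here $H-S$ is the blow-up of $G-U$: each isolated vertex $v\in I(U)$ contributes its $b(v)$ copies as singleton (hence odd) components, while each non-trivial component $C$ of $G-U$ blows up to a single connected component on $\sum_{v\in C}b(v)$ vertices, which is odd exactly when $C$ is counted by $S(G-U)$. Thus $o(H-S)=\sum_{x\in I(U)}b(x)+S(G-U)$ and $|S|=\sum_{x\in U}b(x)$, so $o(H-S)\leq|S|$ is precisely~\eqref{eq:aux}. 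Since a perfect matching of $H$ forces Tutte's inequality for every $S$, and in particular for every class-union barrier, this already proves a)$\Rightarrow$b).

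For the converse I must upgrade~\eqref{eq:aux}, namely Tutte's inequality for class-union barriers, to Tutte's inequality for \emph{all} $S$. The key observation is that the copies in one class $V_v$ are false twins: pairwise non-adjacent with a common neighbourhood. A short exchange argument then shows that if some class is split by $S$, one may move its vertices out of $S$ one at a time without ever decreasing $o(H-S)-|S|$; processing the classes in turn yields a maximizer of this quantity of class-union form. Consequently $\max_S\bigl(o(H-S)-|S|\bigr)$ is attained on a class-union barrier, so if~\eqref{eq:aux} holds for every $U$ then $o(H-S)\leq|S|$ for every $S$, and $H$ has a perfect matching by Tutte. I expect the normalization step to be the main obstacle: one must check that the twin-exchange is non-decreasing in all three cases (the moved copy joins an odd component, joins an even component, or becomes isolated because all its neighbours already lie in $S$), and that moving a vertex of $V_v$ cannot create a split in any other class. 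The component bookkeeping above, together with the weight-zero reduction, then completes the equivalence.
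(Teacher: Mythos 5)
Your proof is correct, but there is nothing in the paper to compare it against: the paper states Theorem~\ref{theorem:aux} as a known result of Tutte~\cite{tutte1952factors} and never proves it, using it purely as a black box in the proofs of Theorem~\ref{theorem:graph-pos} and Theorem~\ref{thm:pos-equi-poly}. Your argument is the standard self-contained derivation from Tutte's $1$-factor theorem, and its steps check out. The blow-up equivalence is sound because a perfect $b$-matching automatically satisfies $x_{\{u,v\}}\leq\min\bigl(b(u),b(v)\bigr)$, so the $x_{\{u,v\}}$ matched pairs always fit inside the complete bipartite graph between $V_u$ and $V_v$. Your preliminary reduction to $b\geq 1$ is genuinely needed, not cosmetic: a weight-zero vertex inside a non-trivial component of $G-U$ could make the blow-up of that component disconnected and would destroy the identity $o(H-S)=\sum_{x\in I(U)}b(x)+S(G-U)$ for the class-union barrier $S=\bigcup_{v\in U}V_v$ (already for $U=\emptyset$), and your verification that \eqref{eq:aux} is preserved under deleting a weight-zero vertex is right, since after moving it into $U$ an odd component either splits off a non-trivial odd piece or an isolated vertex of odd, hence positive, weight. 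The twin-exchange normalization also works exactly as you predict: if $v''\in S\cap V_v$ has a twin $v'\in V_v\setminus S$, then in $H-(S\setminus\{v''\})$ the vertex $v''$ has the same neighbours as $v'$, all lying in the single component of $v'$ (or none), so no components merge, and the deficiency $o(H-S)-|S|$ changes by $+2$ (if $v''$ becomes isolated or joins an even component) or by $0$ (if it joins an odd component); removing $v''$ leaves every other class's intersection with $S$ untouched, so iterating class by class reaches a class-union maximizer. Two remarks for perspective: first, the implication a)$\Rightarrow$b) does not need the blow-up at all --- each isolated vertex of $G-U$ must send all $b(x)$ of its matched units into $U$, and each component counted by $S(G-U)$ has odd total weight while its internal edges absorb an even amount, so it must send at least one unit into $U$, and $U$ absorbs at most $\sum_{x\in U}b(x)$ units in total. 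Second, Tutte's original paper proves the harder $f$-factor theorem, where each edge may be used at most once; there the naive blow-up fails and an edge gadget is required, whereas for $b$-matchings with unbounded multiplicities your simpler construction suffices, which is precisely what makes this folklore derivation work here.
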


\begin{proof}[Proof of Theorem~\ref{theorem:graph-pos}]
Suppose that condition~\ref{item:5}) holds. In order to show that any $w \in \N_0^{|V|}$ is equatable, it is enough to show that each assignment $w^{(v)}$ with 
\begin{displaymath}
  w^{(v)}(u) =
  \begin{cases}
    1, & \text{ if } u=v \\
    0, & \text{ otherwise}
  \end{cases}
\end{displaymath}
is equatable, since the corresponding steps decrease the weight of $v$ relative to the other vertices by exactly one. We show this by establishing existence of   a perfect $b$-matching with $b(v) = 2\cdot n$ and $b(u) = 2 \cdot n +1$ for each other vertex $u \neq v$.  

Let $U \subseteq V$. We have to show~\eqref{eq:aux}. 
If $U=\emptyset$, then $G-U=G $. 
The total $b$-weight of $G$ is even and $G-\emptyset$ has only one component, since $G$ is connected, thus $S(G-U) = 0$. Also, $G - \emptyset$ does not have isolated vertices. This shows that the  right-hand-side of  \eqref{eq:aux}
is $0$.

If $U \neq \emptyset$ then, by our assumption, $|I(U)|\leq |U|-1$.
We have 
\begin{equation}\label{eq:a1}
\sum_{x \in U}b(x) \geq (2n+1)|U|-1.
\end{equation}
Indeed, there is equality in (\ref{eq:a1}) only if $v \in U$.

On the other hand $\sum_{x \in I(U)}b(x) \leq |I(U)|(2n+1)$.
Therefore, 
\begin{equation}\label{eq:a2}
\sum_{x \in I(U)}b(x) \leq |I(U)|(2n+1) \leq (|U|-1)(2n+1).
\end{equation}

Finally, the term $S(G-U)$ is at most the number of components of $G$
that are not isolated vertices. Therefore,

\begin{equation}\label{eq:a3}
S(G-U) \leq \frac{n-1}{2}.
\end{equation}

Inequality (\ref{eq:aux}) is, therefore, satisfied
because using  (\ref{eq:a1}), (\ref{eq:a2}), and (\ref{eq:a3}) inequality
(\ref{eq:aux}) reduces to 
$$
(2n+1)|U|-1 \geq (|U|-1)(2n+1) +\frac{n-1}{2},
$$  
which clearly holds.

Suppose now that every $w \in \N_o^{|V|}$ is equatable. Then clearly,  $G$ is connected. Also $G$ has an odd number of vertices since the parity of the sum of weights is invariant under the edge-increment operation. In a graph with an even number of nodes, an equated assignment has even parity which shows that an odd initial assignment is not equatable. 

Let $U$ be any non-empty set of vertices of $G$. Assume to the contrary that
$G-U$ has $k \geq |U|$ isolated vertices. Denote by $I$ the set of isolated
vertices in $G-U$ and let $v$ be a fixed vertex in $U$. 
Consider the weights assignment $w:V \rightarrow \N_0$ such that 
$w(v)=1$ and for any other vertex $v' \in V$ we have $w(v')=0$.
We reach a contradiction by showing that $w$ is not  equatable. 
To see this observe that
any  step that increases by $1$ the weight of a vertex in $I$ must 
increase by $1$ the weight of some vertex in $U$.
It follows that at any moment the sum of the weights of the vertices in $I$
is strictly smaller than the sum of the weights of the vertices in $U$.
Because $|I| \geq |U|$, it is not possible to reach a situation where 
all vertices in $I \cup U$ have the same weight. \qed
\end{proof}

% \begin{theorem}[Tutte~\cite{tutte1952factors}]
% Let $G=(V,E)$ be a finite graph. Then $G$
% has a perfect matching if and only if:
% $\forall U\subseteq V$, the graph $G-U$ 
% has at most $|S|$ connected component
% with an odd size.
% \end{theorem}

% As we shall see, it is possible to determine in
% polynomial time whether a given graph satisfies 
% the conditions in Theorem~\ref{theorem:graph-pos}.

% As for graphs for which Theorem \ref{theorem:graph-pos}
% does not apply we can show 
% that given a graph $G$ and a weights
% assignment to its vertices $w:V\rightarrow\mathbb{Z}$, 
% it is possible to decide in polynomial time whether $w$ is equatable in $G$
% (see Theorem~\ref{theorem:hyper-alg-neg}).
% In the case where only positive steps are allowed
% we are able to show only a pseudo-polynomial algorithm: 

% \begin{theorem}\label{theorem:graph-alg-pos}
% Given a graph $G=(V,E)$
% and an integer weights assignment $w:V\rightarrow\mathbb{Z}$,
% one can determine whether 
% $w$ is positively equatable in $G$ in time that is polynomial in 
% $|E(G)|,|V(G)|,$ and $||w||_\infty$.
% \end{theorem}
% We leave it as an open question whether there exists a
% polynomial time algorithm for this problem:
% \begin{open}
% Does there exist a polynomial time algorithm that
% determines for every graph $G$ and an
% integer weights assignment to its vertices $w$
% whether $w$ is positively equatable in $G$.
% \end{open}

\section{A polynomial-time algorithm to equate the weights}
\label{sec:polyn-time-algor}

In this section, we deal with the computational problem of deciding
whether an initial assignment $w: V \rightarrow \N_0$ is equatable
and, if so, how to compute a multiset of edges that leads to such
equated weights. Let us recall the connection to the $b$-matching problem. If we know a number $\beta \in \N$  such that all node-weights can be brought to $\beta$ by increment-steps, then the multiset of edges leading to uniform weights $\beta$ is a perfect $b$-matching with weights $b(v) = \beta - w(v)$ for each $v \in V$. The primary question is then: Can  $\beta$ be efficiently computed? We will now give a positive answer to this question. The main result of this section is the following theorem. We first provide an upper bound on $\beta$. 

\begin{theorem}\label{thm:pos-equi-poly}
Given a graph $G=(V,E)$ and an integer weights assignment $w:V\rightarrow\N_0$, one can determine in strongly polynomial time whether $w$ is positively equatable in $G$. Moreover, the smallest multiset of edges equating~$w$ can be determined efficiently.
\end{theorem}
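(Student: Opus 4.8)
The plan is to reduce the problem to finding the smallest admissible target value $\beta$, to bound this value polynomially, and then to locate it with a strongly polynomial number of $b$-matching computations. Recall from the discussion preceding the theorem that $w$ can be equated to the common value $\beta$ if and only if $G$ admits a perfect $b$-matching for $b(v)=\beta-w(v)$; summing $b$ over all vertices shows that such a matching uses exactly $(n\beta-W)/2$ edges, where $n=|V|$ and $W=\sum_{v}w(v)$. Hence the cardinality of the equating multiset is strictly increasing in $\beta$, so a smallest equating multiset corresponds to the smallest feasible $\beta$. Two necessary conditions restrict the candidates: $\beta\ge\max_v w(v)$ (so that $b\ge 0$) and $n\beta\equiv W \pmod 2$ (so that the edge count is an integer), the latter pinning down the parity of $\beta$. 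Thus the task becomes: over all $\beta$ of the correct parity with $\beta\ge\max_v w(v)$, find the smallest one for which the Tutte condition of Theorem~\ref{theorem:aux} holds for $b=\beta-w$.

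The first real step is an upper bound on this optimal $\beta$. Substituting $b(v)=\beta-w(v)$ into inequality~\eqref{eq:aux} and rearranging turns each Tutte constraint into $(|U|-|I(U)|)\,\beta \ge c(U)$, where $c(U)=\sum_{x\in U}w(x)-\sum_{x\in I(U)}w(x)+S(G-U)$ depends only on $w$ and on the parity of $\beta$ (through $S(G-U)$), not on the magnitude of $\beta$. Constraints with $|U|>|I(U)|$ are lower bounds on $\beta$, those with $|U|<|I(U)|$ are upper bounds, and those with $|U|=|I(U)|$ are parity-only feasibility checks; consequently, for each fixed parity the feasible $\beta$ form an interval. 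Since $c(U)\le |U|\max_v w(v)+S(G-U)\le n\max_v w(v)+n/2$ and $|U|-|I(U)|\ge 1$ in the lower-bound case, every lower bound is at most $B:=n\max_v w(v)+n$. Therefore, whenever $w$ is equatable at all, it is equatable with some $\beta\le B$, and the smallest feasible $\beta$ is the left endpoint of the feasible interval.

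With this bound in hand, I would test feasibility for a fixed $\beta$ by running a strongly polynomial perfect $b$-matching algorithm (e.g. \cite{anstee1987polynomial}); this simultaneously certifies equatability and, on failure, returns a Tutte-violating set $U$. A naive binary search for the left endpoint in $[\max_v w(v),B]$ already decides the problem, but costs $\Theta(\log \max_v w(v))$ matching calls and is therefore only weakly polynomial. To obtain strong polynomiality I would instead drive $\beta$ upward by the violated constraints: starting from the smallest candidate $\beta$, whenever the matching fails I raise $\beta$ to the smallest value satisfying the returned set's inequality $(|U|-|I(U)|)\beta\ge c(U)$ (respecting parity), and repeat; if $\beta$ ever exceeds $B$, or an unfixable equality/upper-bound constraint is hit, I declare $w$ not equatable. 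Because each increase only repairs a currently violated lower-bound constraint and the left endpoint is the maximum of all such bounds, this Dinkelbach/Newton-style scheme never overshoots, so on termination it returns the minimum feasible $\beta$ and hence the minimum multiset of edges.

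The main obstacle is exactly the passage from weakly to strongly polynomial: I must bound the number of constraint-guided iterations by a function of $n$ and $m$ alone, independent of the bit length of $w$. The key will be a potential or monotonicity argument showing that the violating sets returned across iterations cannot recur and that $\beta$ makes geometric-type progress toward the left endpoint, so that only polynomially many matching computations occur; alternatively, one can invoke a strongly polynomial combinatorial linear programming algorithm on the $\{-1,0,1\}$-valued matrix $[\,A \mid -\mathbf 1\,]$, where $A$ is the vertex--edge incidence matrix and the extra column encodes $\beta$, whose running time is independent of $w$, and then reconcile it with integrality via the $b$-matching polytope. Integrality itself is not an extra difficulty, since perfect $b$-matchings are integral and the matching subroutine already produces an integral multiset of edges. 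Finally, a disconnected $G$ adds only the requirement that every component reach the common value $\beta$, which the same analysis covers component by component.
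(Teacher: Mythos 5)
Your reduction and weakly polynomial part essentially reproduce the paper's argument: reduce to the smallest feasible target $\beta$ via Tutte's Theorem~\ref{theorem:aux}, observe that once the parity of $\beta$ is fixed both sides of~\eqref{eq:aux} are linear in $\beta$ (your trichotomy of lower-bound, upper-bound, and parity-only constraints is exactly the paper's cases (i)--(iii), and your interval observation follows), prove an upper bound of order $n\max_{v}w(v)$ (the paper's Lemma~\ref{lem:pos-equi-poly}), and search the interval using a perfect $b$-matching oracle that returns a Tutte-violating set on failure. One small slip here: the condition $n\beta\equiv W\pmod 2$ pins down the parity of $\beta$ only when $n$ is odd; for even $n$ it merely forces $W$ to be even, and both parities of $\beta$ must be examined separately (the paper indeed runs the two parity cases sequentially). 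Your reduction of ``smallest multiset'' to ``smallest $\beta$'' via the identity that any equating multiset for target $\beta$ has cardinality $(n\beta-W)/2$ is correct and makes explicit a step the paper leaves implicit.

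The genuine gap is precisely the step you flag yourself: strong polynomiality is asserted but not established, and neither of your two suggested repairs works as stated. For the Newton/Dinkelbach scheme, the fact that a repaired lower-bound constraint stays satisfied as $\beta$ increases only shows that violating sets do not recur, which bounds the iteration count by the number of subsets $U$, i.e.\ $2^n$ --- not polynomial. The classical slope-counting bound (at most $n$ distinct slopes $|U|-|I(U)|$, hence at most $n$ iterations) requires the oracle to return a \emph{most violated} set at each step, whereas a perfect $b$-matching algorithm certifies infeasibility with \emph{some} violating $U$, with no control over which; so no ``geometric-type progress'' is available without a substantially new argument. The LP alternative fails on integrality: the polyhedron $\{x\geq0:\sum_{e\in\delta(v)}x_e=\beta-w(v)\ \forall v\}$ is not integral (on a triangle with all $b(v)=1$, the point $x_e=\tfrac12$ is feasible but no perfect $b$-matching exists), so the minimum LP-feasible $\beta$ can be strictly smaller than the minimum integrally feasible one; describing the integral hull requires exponentially many blossom inequalities, which puts the problem outside the reach of strongly polynomial LP methods that need an explicitly given small-entry constraint matrix. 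The paper closes this gap by a different mechanism, Megiddo's parametric search~\cite{Megiddo79}: take a \emph{fully combinatorial} algorithm for perfect $b$-matching --- which exists once the parities of the $b(v)$ are fixed, via a reduction to one network flow and one perfect matching problem~\cite{Gerards94} --- run it on inputs that are linear functions of $\beta$, and resolve each comparison of two linear functions by a single oracle call at their crossing point; the number of oracle calls is bounded by the number of comparisons of the combinatorial algorithm, which yields the strongly polynomial bound that your proposal lacks.
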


We again make use of Theorem~\ref{theorem:aux}. For some  target value~$\beta\geq\max_{v\in V}w(v)$ let~$b_{\beta}:V\to\mathbb{Z}_{\geq0}$ with~$b_{\beta}(v):=\beta-w(v)$. By Theorem~\ref{theorem:aux} there is a sequence of positive steps starting from node weights $w$ and yielding uniform node weight~$\beta$ if and only if~\eqref{eq:aux} holds for $b=b_{\beta}$ and all subsets~$U$ of~$V$. 

\begin{lemma}\label{lem:pos-equi-poly}
If~$w$ is equatable, there is such a value~$\beta$ that is bounded from above by~$n\max_{v\in V}w(v)$.
\end{lemma}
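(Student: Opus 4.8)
The plan is to reformulate Tutte's condition \eqref{eq:aux} with $b=b_\beta$ as a family of linear inequalities in the single unknown $\beta$, and then bound the smallest feasible value. Write $n=|V|$, $W=\max_{v\in V}w(v)$, and for $S\subseteq V$ set $w(S)=\sum_{x\in S}w(x)$. For a fixed $U\subseteq V$ with $|U|=u$ and $|I(U)|=i$, substituting $b_\beta(x)=\beta-w(x)$ turns \eqref{eq:aux} into
\begin{equation}\label{eq:beta-constraint}
(u-i)\,\beta \;\ge\; w(U)-w(I(U))+S(G-U).
\end{equation}
The quantity $S(G-U)$ depends on $\beta$ only through the parities of the $b_\beta$-weights of the components of $G-U$; since replacing $\beta$ by $\beta+2$ changes each such weight by an even amount, $S(G-U)$ is constant on each residue class of $\beta$ modulo $2$. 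Moreover, each positive step increases the total weight by $2$, so a feasible $\beta$ must satisfy $n\beta-\sum_v w(v)\in 2\N_0$; this fixes the parity of $\beta$ when $n$ is odd and forces $\sum_v w(v)$ to be even when $n$ is even.

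Having fixed the (forced) parity class of $\beta$, I would observe that \eqref{eq:beta-constraint} is a lower bound on $\beta$ exactly for those $U$ with $u>i$, an upper bound for those with $u<i$, and a parity-dependent constant for $u=i$. Since $w$ is equatable, some $\beta_0\ge W$ satisfies all of them. Decreasing $\beta$ within its parity class only tightens the $u>i$ constraints and relaxes the others, so the smallest $\beta$ of the correct parity with $\beta\ge W$ that satisfies all the lower-bound constraints \eqref{eq:beta-constraint} is itself feasible and at most $\beta_0$. Call it $\beta_{\min}$; it then suffices to bound $\beta_{\min}$ from above by $nW$.

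For this I would bound each individual lower bound. Using $w(U)\le uW$, $w(I(U))\ge 0$, and $S(G-U)\le (n-u-i)/2$ (each nonisolated component of $G-U$ has at least two vertices), the right-hand side of \eqref{eq:beta-constraint} is at most $uW+(n-u-i)/2$. Writing $d=u-i\ge 1$ and $i=u-d$, a short optimization over the admissible range $d\le u\le (n+d)/2$ shows that
\begin{displaymath}
\frac{w(U)-w(I(U))+S(G-U)}{u-i}\;\le\;\frac{(n+d)\,W}{2d}\;\le\;\frac{(n+1)\,W}{2},
\end{displaymath}
the last maximum occurring at $d=1$. Hence every lower-bound constraint forces $\beta$ to be at least $\tfrac{(n+1)W}{2}$ at most, and since the slack $nW-\tfrac{(n+1)W}{2}=\tfrac{(n-1)W}{2}$ is nonnegative, rounding $\max\{W,\tfrac{(n+1)W}{2}\}$ up to the prescribed parity still keeps $\beta_{\min}\le nW$ (the degenerate cases $n\le 2$ being checked directly). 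This yields the claimed bound.

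The main obstacle, I expect, is the parity bookkeeping rather than the estimate: verifying that $S(G-U)$ is genuinely constant on a parity class, that the correct-parity rounding of the lower-bound maximum does not get pushed past $nW$, and—most importantly—that the minimal feasible $\beta$ is controlled by the $u>i$ constraints alone, with the $u<i$ and $u=i$ constraints being automatically inherited from the known feasible $\beta_0$. Once these structural points are in place, the optimization bounding each lower bound is routine.
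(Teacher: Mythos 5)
Your proposal is correct and takes essentially the same route as the paper's proof: both linearize \eqref{eq:aux} in $\beta$, observe that only subsets with $|U|>|I(U)|$ impose lower bounds on $\beta$ while the remaining constraints only get easier as $\beta$ decreases and are thus inherited from the feasible value guaranteed by equatability, and then bound the critical lower-bound constraints. The only difference is presentational: the paper simply verifies the $|U|>|I(U)|$ constraints directly at $\beta=n\max_{v\in V}w(v)$ via a short chain of inequalities (deferring the parity bookkeeping to the proof of Theorem~\ref{thm:pos-equi-poly}), whereas you optimize the ratio $\bigl(w(U)-w(I(U))+S(G-U)\bigr)/(|U|-|I(U)|)$ to obtain the sharper intermediate bound $(n+1)\max_{v\in V}w(v)/2$ and carry out the parity rounding explicitly.
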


\begin{proof}
For the trivial case where $w$ is uniformly zero we can choose $\beta=0$. Thus, in what follows we might assume that $\max_{v\in V}w(v)\geq1$. Notice that with respect to~\eqref{eq:aux} the only subsets $U$ of $V$ that might force~$\beta$ to be big are those with~$|U|>|I(U)|$ (otherwise, if $|U|\leq|I(U)|$, the left hand side of~\eqref{eq:aux} as a function of $\beta$ increases at most as fast as the right hand side does). For such subset $U$, however, and for $\beta=n\max_{v\in V}w(v)$ we get
\begin{align*}
\sum_{x\in U}b_{\beta}(x)&\geq |U|(\beta-\max_{v\in V}w(v))\\
&\geq|U|\beta-n\max_{v\in V}w(v)+n-|U| \\
& =(|U|-1)\beta+n-|U|\geq\sum_{x\in I(U)}b_{\beta}(x)+S(G-U)
\end{align*}
which concludes the proof. \qed
\end{proof}

\begin{proof}[Proof of Theorem~3]  
From now on we fix the parity of $\beta$ (even or odd) such that $S(G-U)$ no longer depends on the particular value of $\beta$ (in our algorithm we deal with the two cases sequentially). In particular, for a fixed subset $U$ of $V$, both the left hand side and the right hand side of~\eqref{eq:aux} are linear functions of~$\beta$. Therefore, for each $U\subseteq V$, one of three cases holds:
\begin{itemize}
\item[(i)] \eqref{eq:aux} is satisfied for all values of $\beta$ or for no value of $\beta$;
\item[(ii)] there is a $\beta_U\in\mathbb{Z}$ such that \eqref{eq:aux} is satisfied if and only if $\beta\geq\beta_U$;
\item[(iii)] there is a $\beta_U\in\mathbb{Z}$ such that \eqref{eq:aux} is satisfied if and only if $\beta\leq\beta_U$.
\end{itemize}
This observation finally enables us to find the smallest feasible
value of $\beta$ (with fixed parity) by binary search in polynomial
time: Let $\alpha=\max_{v\in V}w(v)$ and $\gamma=n\max_{v\in
  V}w(v)$. Due to Lemma~\ref{lem:pos-equi-poly}, we can restrict our
search for a suitable value~$\beta$ to the
interval~$[\alpha,\gamma]$. For fixed~$\beta'\in[\alpha,\gamma]$, we
can test in polynomial time whether~\eqref{eq:aux} is satisfied for
all subsets $U$ of $V$ and obtain a violating subset $U$ in the
negative case. In fact such such a violating set is found by the algorithm for the perfect $b_{\beta'}$-matching problem 
that also certifies the non-existence of a perfect
$b_{\beta'}$-matching with a set $U \subseteq V$
violating~(\ref{eq:aux}).

In the positive case, we can decrease the upper bound~$\gamma$ to~$\beta'$ and continue the search. In the negative case, we distinguish the three cases (i), (ii), and (iii) listed above w.r.t.\ the violating subset~$U$. In case~(i), there is no feasible~$\beta$ and we thus terminate the search. In case~(ii) we obtain a new lower bound~$\beta_U>\beta'$ and thus continue the search after replacing $\alpha$ with $\beta_U$. Finally, in case~(iii) we obtain a new upper bound~$\beta_U<\beta'$ and thus continue the search after replacing~$\gamma$ with $\beta_U$.

Notice that the running time of the resulting binary search algorithm is only weakly polynomial. A strongly polynomial running time can be achieved by replacing binary search with \emph{parametric search}~\cite{Megiddo79}.\qed
\end{proof}

\begin{remark}  
  We sketch Megiddo's parametric search
  technique~\cite{Megiddo79} in our setting. Suppose that we want to
  find the smallest even $\beta$ such that there exists a perfect
  $b_\beta$-matching.  Consider a fully combinatorial algorithm~$A$
  for the non-parametric problem, that is, for the perfect
  $b$-matching problem. A fully combinatorial algorithm uses only
  additions, subtractions, and comparisons.  In fact, such an
  algorithm exists for the perfect $b$-matching problem, if the
  parities of the $b(v)$ are fixed, as it is for the case of all
  parametric $b_\beta$, if $\beta$ is even,
  see~\cite[p.~186]{Gerards94}. More precisely, the algorithm consists then
  of solving one general network flow problem and one perfect matching
  problem on graphs that are polynomial in the size of the graph $G$
  that is in our input, see, \cite{Gerards94}. For both subproblems,
  there exist fully-combinatorial algorithms, for example the minimum
  mean-cycle algorithm~\cite{Goldberg:1989:FMC:76359.76368} and
  Edmond's algorithm~\cite{Edmonds65}.

  Algorithm~$A$ is now modified in order to solve the parametric
  problem. For this, the modified algorithm has to work with linear
  functions of the parameter $\beta$ instead of just constant
  numbers. Notice that adding or subtracting two linear functions
  yields a linear function again. Comparing two linear functions,
  however, imposes a problem. Whenever algorithm~$A$ compares two
  numbers, the modified version first has to determine whether the
  desired value $\beta_{OPT}$ is smaller or larger than the unique point
  $\beta^*$ at which the two linear functions cross (if at all). This
  can be decided by calling any algorithm~$B$ for the perfect
  $b$-matching problem as a subroutine for the fixed parameter value
  $\beta^*$. 
  Depending on the outcome, the corresponding alternative of the comparison is chosen, for example in an {\tt if}-conditional, and one continues to run algorithm $A$ for the parametric problem.  
The number of calls of $B$ is bounded by the number of
  comparisons performed by~$A$ which is strongly polynomial in the
  input size. In this way, finding the desired value $\beta_{OPT}$ is
  reduced to a series of non-parametric $b$-matching problems.
\end{remark}

\section{Bipartite graphs.}
\label{sec:bipartite-graphs}

Going back to the elementary puzzle presented at the beginning, 
observe that the corresponding graph $G$  is bipartite  where the two parts
have the same cardinality. The initial weight $w$ is not equatable, since the sum of the weights of the vertices in one part of the bi-partition is not equal to the sum of the weights. 

Let $G=(L,R,E)$ be a bipartite graph. 
An assignment of weights $w$ to the vertices of $G$ is called 
\emph{balanced} if $w(L)=w(R)$, 
where for a subset $U$ of vertices, $w(U)$ 
is defined as $\sum_{v\in U}{w(v)}$.

We now  characterize those bipartite graphs
for which all balanced assignment $w$ are equatable.

\begin{theorem}\label{theorem:bipartite-pos}
Let $G=(L+R,E)$ be a bipartite graph. The following statements are equivalent:
\begin{enumerate}[i)]
\item Every balanced assignment is positively equatable in $G$. \label{item:6}
\item For any non empty set of vertices $X$ that is properly 
contained either in $L$ or in $R$, $|X| < |N(X)|$. \label{item:7}
\end{enumerate}
\end{theorem}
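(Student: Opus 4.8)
The plan is to prove the two implications separately, reusing the two engines behind the earlier results: the Hall/transportation feasibility criterion for perfect $b$-matchings in bipartite graphs, and the monotone-invariant idea from the orange-boxes puzzle. Before either direction I would record three structural consequences of condition~\ref{item:7}). Taking $X=L\setminus\{u\}$ (a proper nonempty subset once $|L|\ge 2$) forces $|N(X)|\ge|L|$ and hence $|R|\ge|L|$; symmetrically $|L|\ge|R|$, so $|L|=|R|=:n$. Taking $X$ a single vertex shows every vertex has degree at least two, so there are no isolated vertices and in particular $N(L)=R$ and $N(R)=L$. Finally, a splitting of $G$ into two parts would make one side violate the strict inequality, so $G$ is connected. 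The degenerate case $n=1$ (a single edge) satisfies both \ref{item:6}) and \ref{item:7}) trivially.

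For \ref{item:7}) $\Rightarrow$ \ref{item:6}), let $w$ be balanced with $w(L)=w(R)=s$. As in the proof of Theorem~\ref{theorem:graph-pos}, an equating multiset of edges is exactly a perfect $b_\beta$-matching with $b_\beta=\beta\mathbf 1-w$, so I must produce a feasible $\beta$. Rather than track the odd-component term $S(G-U)$ of Theorem~\ref{theorem:aux} directly, I would use its classical bipartite specialization: a perfect $b$-matching exists iff $b(L)=b(R)$ and $b(X)\le b(N(X))$ for every $X\subseteq L$ (feasibility of the associated transportation problem; this also follows from Theorem~\ref{theorem:aux}). Balance $b_\beta(L)=b_\beta(R)$ is immediate from $|L|=|R|$ and $w(L)=w(R)$. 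For the Hall inequalities I would fix $\beta\ge\max(s,\max_v w(v))$ and compute $b_\beta(N(X))-b_\beta(X)=(|N(X)|-|X|)\beta-\bigl(w(N(X))-w(X)\bigr)$. The cases $X=\emptyset$ and $X=L$ give equality (for $X=L$ one uses $N(L)=R$), while for every proper nonempty $X$ strict Hall yields $|N(X)|-|X|\ge 1$, so the expression is at least $\beta-w(N(X))\ge\beta-s\ge 0$. Hence $b_\beta$ has a perfect matching and $w$ is equatable. (One may equivalently reduce first to the elementary balanced assignments carrying weight $1$ on a single $u\in L$ and $v\in R$, mirroring Theorem~\ref{theorem:graph-pos}, and run the same estimate.)

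For \ref{item:6}) $\Rightarrow$ \ref{item:7}) I would argue by contraposition, generalizing the puzzle's invariant. Assume \ref{item:7}) fails; by the $L/R$ symmetry take a nonempty $X\subsetneq L$ with $|X|\ge|N(X)|$. If $N(X)=\emptyset$ then $X$ consists of isolated vertices, and any balanced assignment that is $0$ on such a vertex but positive elsewhere cannot be equated (an isolated vertex never changes, forcing the common value to be $0$, which positive weights can never reach under increments). So assume $N(X)\ne\emptyset$. The key observation is that $\Phi:=w(N(X))-w(X)$ is nondecreasing under positive steps: an increment on $e=\{l,r\}$ raises $w(X)$ only when $l\in X$, and then $r\in N(X)$ raises $w(N(X))$ by the same unit, whereas increments with $r\in N(X)$ but $l\notin X$ only increase $\Phi$. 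I would then take the balanced assignment placing weight $1$ on each vertex of $N(X)$, weight $0$ on $X$, and the remaining mass $|N(X)|$ on the nonempty set $L\setminus X$, so that initially $\Phi=|N(X)|>0$. Were it equatable to a common value $\beta\ge 0$, the equated state would have $\Phi=(|N(X)|-|X|)\beta\le 0$, contradicting monotonicity; hence it is not equatable.

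I expect the forward direction to be the only delicate point. The invariant direction is an almost verbatim lift of the puzzle. The obstacle in \ref{item:7}) $\Rightarrow$ \ref{item:6}) is isolating the right feasibility criterion and handling the boundary sets $X=\emptyset$ and $X=L$, where only the non-strict inequality $b(X)\le b(N(X))$ can hold (with equality); it is precisely here that imposing the strict inequality only on \emph{proper} nonempty subsets, together with the preliminary facts $|L|=|R|$ and $N(L)=R$, makes the argument go through.
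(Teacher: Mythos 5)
Your proposal is correct; the invariant direction matches the paper's proof almost verbatim, but your forward direction takes a genuinely different route. For \ref{item:6}) $\Rightarrow$ \ref{item:7}) the paper uses exactly your monotonicity of $w(N(X))-w(X)$ under positive steps, only with a sparser witness: weight $1$ on a single vertex of $N(X)$ and a single vertex of $L\setminus X$, rather than your mass $|N(X)|$ spread over $N(X)$; the two witnesses are interchangeable, and your $N(X)=\emptyset$ case is the paper's as well. For \ref{item:7}) $\Rightarrow$ \ref{item:6}) the paper instead mirrors the reduction used in Theorem~\ref{theorem:graph-pos}: it suffices to equate the pair assignments $w(u)=w(v)=1$ with $u\in L$, $v\in R$, and each such assignment is equated in one round because $G-\{u,v\}$ satisfies ordinary Hall's condition (strict Hall on proper nonempty subsets of $G$ loses at most one neighbor upon deleting one vertex per side), hence has a perfect matching whose edges are incremented once. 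You bypass this composition entirely, applying Gale's feasibility criterion for bipartite perfect $b$-matchings directly to the full deficiency vector $b_\beta=\beta\mathbf{1}-w$ and verifying the one-sided Hall inequalities, with the boundary sets $X=\emptyset$ and $X=L$ handled via $|L|=|R|$, $w(L)=w(R)$, and $N(L)=R$; your estimate $b_\beta(N(X))-b_\beta(X)\geq\beta-w(N(X))\geq\beta-s$ for proper nonempty $X$ is correct. Your route buys something the paper does not state: any $\beta\geq w(L)$ is feasible (indeed $\max_v w(v)\leq w(L)$ for balanced $w$), so in the balanced bipartite case there is no parity obstruction and the bound of Lemma~\ref{lem:pos-equi-poly} improves from $n\max_v w(v)$ to $w(L)$; the cost is invoking the transportation feasibility theorem rather than bare Hall's marriage theorem. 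Two small points to make explicit if you write this up: first, your criterion imposes $b(X)\leq b(N(X))$ only for $X\subseteq L$, and the equivalence with the symmetric two-sided condition given $b(L)=b(R)$ deserves a line (for $Y\subseteq R$ apply the $L$-side inequality to $X=L\setminus N(Y)$); second, the wholly degenerate configurations such as $R=\emptyset$ with $|L|\geq 2$, where the only balanced nonnegative assignment is $w\equiv 0$, are glossed over by your isolated-vertex argument --- though the paper's own proof is equally silent there.
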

Here $N(X)$ denotes the \emph{neighborhood} of $X$, that is, 
the set of vertices in $G$ that are neighbors of
some vertex in $X$.  Notice that  condition~\ref{item:7}) implies that $|L| = |R|$ holds. Condition~\ref{item:7})  is
a ``strict'' version of the well known \emph{Hall's condition} for the existence
of a perfect matching. A bipartite graph has a perfect matching 
if and only if for any non empty set of vertices $X$ that is properly
contained either in $L$ or in $R$, $|X| \leq |N(X)|$.

\begin{proof}

Suppose that every balanced assignment to the vertices of $G$ is
positively equatable. 
Assume to the contrary that there exists $X \subset L$, $0<|X|<|L|$, 
and $|X| \geq |N(X)|$ (the symmetric case where $X \subset R$ is similar).
If $N(X)=\emptyset$, then the vertices in $X$ are isolated and 
any balanced assignment of weights to the vertices 
in $G$ where
the vertices in $X$ get weight $0$ and some other vertex not in $X$ gets 
a positive weight is not equatable.

If $N(X)$ is not empty, 
consider the following balanced assignment of weights to the vertices of $G$.
Pick a vertex $v \in L \setminus X$ and a vertex $u \in N(X)$. We define
$w(v)=w(u)=1$ and for every other vertex $z$ of $G$ we define $w(z)=0$.
Clearly, $w$ is balanced. However, the graph $G$ together
with the assignment of weights $w$ is not equatable.
This is because a positive step increases the total weight of the vertices in
$N(X)$ by at least the same amount by which it increases the total weight of 
the vertices in $X$. If after a series of positive steps the weights of the
vertices in $G$ are the same, then in particular the total weight
of the vertices in $X$ is at least as large as the total weight of the vertices
in $N(X)$ (because $|X| \geq |N(X)|$), but this is impossible because 
for the initial assignment of weight the total weight of the vertices in 
$X$ is $0$ while the total weight of the vertices in $N(X)$ is $1$.

Now, suppose that~\ref{item:7}) holds.  As in the proof of
Theorem~\ref{theorem:graph-pos}, it is enough to show that any
assignment with $w(u) = w(v) = 1$ for some $u \in L$ and $v \in R$ and
$w(x)=0$ for any other vertex is equatable, since then, each balanced
assignment can be equated. This however follows from the fact that $G
- \{u,v\}$ has a perfect matching, since it satisfies Hall's
condition. \qed
\end{proof}

\subsection{Hypergraphs}
\label{sec:hypergraphs}

One can naturally  generalize the equating problem to hypergraphs. 
In this setting, one is given a hypergraph  $H=(V,E)$ and an integer weights assignment
$w:V\rightarrow\N_0$. A \emph{a positive step}
on $e\in E$ modifies $w$ by increasing  the
weights of each $u\in E$ by $1$ unit. The rest of the definitions
are generalized in the obvious way. Not surprisingly, 
deciding, whether one can equate the weights in a hypergraph is 
 NP-complete. This follows by a reduction from \emph{$3$-dimensional matching}~\cite{Karp72}.  

 \begin{figure}
   \centering
   \begin{tikzpicture}
%\draw[step=1cm,gray,very thin] (0,0) grid (6,6);
    \draw (3,4) ellipse (2cm and 1cm);
    \node at (3,4) {$H$};
    \draw[rotate around ={-45:(2,2)}] (2,2) ellipse (2cm and .5cm);
    \draw[rotate around ={45:(4,2)}] (4,2) ellipse (2cm and .5cm);
    \filldraw [lightgray](3,1.1) circle (2pt) node[right,black] {$1$};
    \filldraw [lightgray](1.2,2.8) circle (2pt) node[right,black] {$1$};
    \filldraw [lightgray](4.8,2.8) circle (2pt) node[right,black] {$1$};
\end{tikzpicture}
   \caption{NP-completeness in the hypergraph setting. }
   \label{fig:2}
 \end{figure}
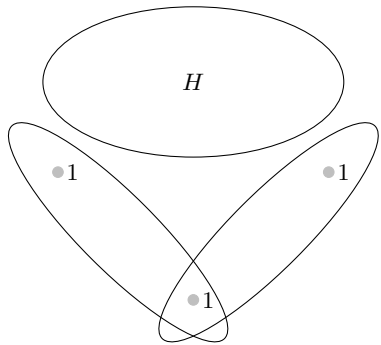

Thus deciding whether a hypergraph has a perfect matching is an  NP-complete problem. This can be trivially reduced to the equating problem by adding three new vertices and two new edges, each consisting of two of the three new vertices. The three vertices have weight $1$ while all other vertices have weight $0$. If these weights can be equated, then they all have weight $1$ in an equated assignment. Thus, the weights can be equated if and only if the original hypergraph has a perfect matching. Consequently, the following theorem holds.

\begin{theorem}\label{theorem:hyper-alg-pos}
The decision problem of determining for
any hypergraph $H=(V,E)$ and any integer weights assignment
$w:V\rightarrow\mathbb{Z}$, whether
$w$ is positively equatable in $H$ is
NP-complete. 
\end{theorem}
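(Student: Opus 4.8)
The plan is to establish NP-completeness in two directions. For membership in NP, I would observe that if $w$ is positively equatable then, exactly as in the graph case, an equating multiset of hyperedges is a perfect $b$-matching with $b(v)=\beta-w(v)$ for the common target weight $\beta$. A suitable certificate is the target value $\beta$ together with the multiplicity vector $x\in\N_0^{|E|}$; one checks in polynomial time that $\sum_{e\ni v}x_e=\beta-w(v)$ for every vertex $v$. The only subtlety is bounding the size of the certificate: I would argue, paralleling Lemma~\ref{lem:pos-equi-poly}, that if an equating sequence exists then one exists with $\beta$ polynomially bounded in $n$ and $\max_v w(v)$, so the numbers $x_e$ have polynomially many bits.

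For NP-hardness I would give the reduction sketched in the text, from the $NP$-complete problem of deciding whether a hypergraph $H'=(V',E')$ has a perfect matching (\emph{$3$-dimensional matching},~\cite{Karp72}, is the special case where $H'$ is $3$-uniform and tripartite). Given $H'$, construct an instance of the equating problem as follows: let $H=(V,E)$ where $V=V'\cup\{a,b,c\}$ for three new vertices and $E=E'\cup\{\{a,b\},\{b,c\}\}$. Define the assignment $w$ by $w(a)=w(b)=w(c)=1$ and $w(v)=0$ for all $v\in V'$. This construction is clearly computable in polynomial time.

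The heart of the argument is the correctness of the reduction, which I expect to be the main (though still elementary) obstacle. First, the parity invariant forces the analysis: since each positive step increments exactly two of the three vertices $a,b,c$ (namely, it either touches none of them, via an edge of $E'$, or it touches two of them, via $\{a,b\}$ or $\{b,c\}$), the quantity $w(a)+w(b)+w(c)$ changes by $0$ or $2$ and hence stays odd, so if the common equated value is $\beta$ then $3\beta$ is odd, forcing $\beta$ odd. I would then show that equating to the \emph{smallest} possible value is the relevant case: if $H$ is equatable at all, it is equatable with $\beta=1$. Indeed, the three auxiliary vertices start at weight $1$, and raising them all to a common odd value $\beta\ge 3$ would require using the edges $\{a,b\}$ and $\{b,c\}$, but any use of $\{b,c\}$ raises $b$ and $c$ together while $a$ can only be raised through $\{a,b\}$, so a short parity/counting argument pins $\beta=1$ and forbids any use of $\{a,b\}$ or $\{b,c\}$. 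Consequently the only steps available are edges of $E'$, each of which must raise the weight of every $v\in V'$ from $0$ to exactly $1$: this is precisely a perfect matching of $H'$.

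Putting the two implications together yields that $w$ is positively equatable in $H$ if and only if $H'$ admits a perfect matching, which together with membership in NP establishes the theorem. The one place I would be careful is the clean justification that no auxiliary edge can be used (ruling out $\beta>1$); handling this rigorously, rather than hand-waving via the invariant, is the crux of the correctness proof.
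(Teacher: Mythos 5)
Your NP-hardness reduction is exactly the paper's: add three fresh vertices $a,b,c$, the two auxiliary edges $\{a,b\},\{b,c\}$, weights $1$ on $a,b,c$ and $0$ elsewhere, and observe that equatability is equivalent to $H'$ having a perfect matching. The loose end you flag (ruling out $\beta>1$ ``rigorously'') closes in one line: the quantity $w(a)-w(b)+w(c)$ is preserved by every positive step (edges of $E'$ avoid $a,b,c$, and each auxiliary edge contributes $+1-1=0$), it equals $1$ initially and equals $\beta$ in any equated state, so $\beta=1$. Since weights never decrease, neither auxiliary edge can ever be used, and each $v\in V'$ must be raised from $0$ to $1$ by edges of $E'$, each used at most once --- a perfect matching. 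Your counting remark (``$b$ receives the sum of what $a$ and $c$ receive'', i.e.\ $\beta-1=2(\beta-1)$) is the same computation; the parity discussion of $w(a)+w(b)+w(c)$ is superfluous. Note the paper itself only proves hardness and asserts completeness, so your membership argument goes beyond what the paper writes down.

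That membership argument, however, contains a genuine gap. You propose to bound $\beta$ ``paralleling Lemma~\ref{lem:pos-equi-poly}'', but that lemma is proved via Tutte's condition~\eqref{eq:aux} (Theorem~\ref{theorem:aux}), which has no hypergraph analogue --- indeed, the nonexistence of such a good characterization is the whole point of the theorem you are proving. Worse, the claimed bound ``$\beta$ polynomial in $n$ and $\max_v w(v)$'' is false for hypergraphs: take $m+1$ vertices $a_1,\dots,a_{m+1}$ with edges all $m$-element subsets; if the edge missing $a_j$ is used $x_j$ times and $X=\sum_j x_j$, then $X-x_j=\beta-w(a_j)$ for all $j$, and summing gives $mX=(m+1)\beta-W$ with $W=\sum_j w(a_j)$, forcing $\beta\equiv W\pmod m$ (and conversely every sufficiently large $\beta$ in this residue class is feasible). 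Taking disjoint such gadgets for pairwise coprime moduli $m_1,\dots,m_t$ with $0/1$ weights chosen to prescribe the residues, the least feasible common $\beta$ is a Chinese-remainder value that grows superpolynomially in $n$ even though $\max_v w(v)=1$. The correct repair is standard: equatability at level $\beta$ is the feasibility of the linear Diophantine system $\sum_{e\ni v}x_e=\beta-w(v)$, $x\in\N_0^{|E|}$, $\beta\in\Z$, and by classical bounds on minimal nonnegative integer solutions of such systems (Borosh--Treybig; von zur Gathen--Sieveking), a feasible system has a solution of bit-size polynomial in the input, so $(\beta,x)$ is a polynomially checkable certificate. With that substitution your proof is complete and matches the paper's intent.
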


% {
% % \small 
% \bibliographystyle{plain}
% \bibliography{mybib,books,papers}
% }

\end{document}